\pgfplotsset{compat=1.18}
\theoremstyle{plain}
\newtheorem{lem}{Lemma}
\newtheorem{pro}{Proposition}
\theoremstyle{definition}
\newtheorem{exmp}{Example}
\theoremstyle{remark}
\DeclareMathOperator{\tr}{tr}
\DeclareMathOperator{\Real}{Re}
\DeclarePairedDelimiter{\abs}{\lvert}{\rvert}
\newcommand{\hil}{\mathcal{H}}
\newcommand{\linear}{\mathcal{L}}
\newcommand{\state}{\mathcal{S}}
\newcommand{\id}{\mathcal{I}}
\newcommand{\ketbra}[2]{\ket{#1}\bra{#2}}
\newcommand{\sysa}{\text{anc}}
\newcommand{\sysi}{\text{in}}
\newcommand{\syso}{\text{out}}
\newcommand{\E}{\mathsf{E}}
\newcommand{\T}{\mathsf{T}}
\newcommand\norm[1]{\left\lVert#1\right\rVert}
\newcommand\normOp[1]{\left\lVert#1\right\rVert_{\text{op}}}
\newcommand\iProdHS[2]{\braket{#1|#2}_{\text{HS}}}
\newcommand{\ctil}{\tilde{c}}
\newcommand{\vtil}{\tilde{\mathcal{V}}}
\begin{document}

\title{Fine-Grained Uncertainty Relations for Quantum Testers}

\author{Taihei Kimoto\footnote{kimoto.taihei.22e@st.kyoto-u.ac.jp}}
\affil{Department of Nuclear Engineering, Kyoto University, Kyoto daigaku-katsura\\
Nishikyo-ku, Kyoto 615-8540,
Japan}
\date{}

\maketitle
\begin{abstract}
The uncertainty principle is one of the features of quantum theory.
Fine-grained uncertainty relations (FGURs) are a contemporary interpretation of this principle.
Each FGUR is derived from a scenario where multiple measurements of a quantum state are stochastically performed.
While state measurements are fundamental, measuring quantum processes, namely, completely positive and trace preserving maps, is also crucial both theoretically and practically.
These measurements are mathematically characterized by quantum testers.
In this study, we develop FGURs in terms of quantum testers.
Because state preparation is a type of quantum process, our framework encompasses the conventional case as a special instance.
The generalized FGURs' bound are typically challenging to compute.
Thus, we also provides estimates for these bounds.
Specifically, we explore quantum testers involving maximally entangled states in detail.
Consequently, some FGURs for quantum testers are derived as explicit forms for specific settings.
\end{abstract}

\section{Introduction}
The uncertainty principle is a fundamental aspect of quantum theory.
A well-known inequality that expresses this principle was derived by Robertson\cite{PhysRev.34.163}.
It provides a lower bound for the product of the standard deviations associated with measurements of a quantum state.
Specifically, when this inequality is applied to measurements of a particle's position and momentum, it indicates that the standard deviations cannot be zero simultaneously because the bound becomes a constant.

Thus, Robertson's inequality imposes a nontrivial restriction on the probability distributions of position and momentum.
However, as noted in Ref. \citen{PhysRevLett.50.631}, it has two drawbacks in finite-dimensional systems.
First, the bound of Robertson's inequality can become trivial in certain cases because it depends on the prepared state.
For instance, if the state is an eigenstate of one of the Hermitian operators to be measured, the bound becomes zero, and thus, the inequality does not impose any meaningful restrictions.
Second, the dependence of the standard deviation on measurement outcomes makes it an unsuitable measure for quantifying uncertainty.
Indeed, the standard deviation can change if the corresponding Hermitian operator is scaled by a real number, but this scaling only redefines the outcomes and lacks physical significance.
In Ref. \citen{PhysRevLett.50.631}, Deutsch addressed these issues by using Shannon entropy as a measure of uncertainty.
Maassen and Uffink further refined Deutsch's inequality in Ref. \citen{PhysRevLett.60.1103}, and their result has become foundational in relevant fields.
Uncertainty relations based on informational entropy are known as entropic uncertainty relations (EURs).
EURs are of active research interest not only for their intrinsic value but also for their practical relevance owing to their connection with information processing\cite{RevModPhys.89.015002}.

Other types of uncertainty relations address the issues inherent in Robertson's inequality.
One such example is the fine-grained uncertainty relations (FGURs) introduced in Ref. \citen{doi:10.1126/science.1192065}.
FGURs are sets of inequalities, and each inequality imposes a restriction on a combination of measurement outcomes.
This feature makes FGURs more advantageous than EURs because EURs do not offer such detailed constraints.
Additionally, FGURs can be viewed as more fundamental than EURs because several EURs are derived from FGURs\cite{doi:10.1126/science.1192065,RevModPhys.89.015002}.

Majorization uncertainty relations (MURs) are other contemporary formulations of the uncertainty principle\cite{PhysRevA.84.052117,10.1088/1751-8113/46/27/272002,PhysRevLett.111.230401,PhysRevA.89.052115}.
Majorization is a preorder relation defined for pairs of real vectors.
A key advantage of MURs is that inequalities can be directly obtained by applying Schur-concave functions to them.
A real-valued function defined on a set of real vectors is called Schur-concave if it reverses the order determined by majorization.
For instance, R\'{e}nyi entropy, which generalizes Shannon entropy with a single parameter, is a typical Schur-concave function.
Consequently, MURs can be used to derive EURs formulated with R\'{e}nyi entropy.

Thus, FGURs and MURs can be seen as more refined concepts compared to EURs, and each of them has unique strengths.
FGURs describe restrictions inherent in combinations of outcomes, details not provided by MURs.
Conversely, MURs generally yield stronger EURs than those offered by FGURs\cite{PhysRevA.84.052117,10.1088/1751-8113/46/27/272002,PhysRevLett.111.230401,PhysRevA.89.052115}.
Therefore, both FGURs and MURs are valuable areas of study.

Conventional uncertainty relations, such as EURs, FGURs, and MURs, are primarily formulated for state measurements.
While states are a major focus, other crucial concepts in quantum theory include time evolution induced by unitary operators and measurement processes.
These elements, along with states, can be viewed as specific instances of quantum channels.
Thus, it is natural to extend conventional uncertainty relations to apply to measurements of quantum channels.
Reference \citen{PhysRevResearch.3.023077} presents such generalized uncertainty relations, extending both the EUR from Ref. \citen{Rastegin_2010} (a generalization of the Maassen--Uffink inequality) and the MURs discussed in Refs. \citen{PhysRevA.84.052117,10.1088/1751-8113/46/27/272002,PhysRevLett.111.230401,PhysRevA.89.052115}.
In this study, we perform a similar analysis of FGURs, building on the approach in Ref. \citen{PhysRevResearch.3.023077}.

This paper is organized as follows.
Sections \ref{sec:fg} and \ref{sec:tester} introduce the formulation of FGURs and the framework for measurements of quantum channels, respectively.
Section \ref{sec:result} extends the FGURs, providing a generalization along with a method for estimating the uncertainty bounds, given the difficulty of computing them accurately.
This section also includes concrete examples of the generalized FGURs.
Conclusions and discussion are presented in Sect. \ref{sec:conclusion}.

\section{FGURs} \label{sec:fg}
We identify a quantum system with a Hilbert space $\hil$.
In this study, we assume that Hilbert spaces are finite-dimensional.
The set of all linear operators on $\hil$ is denoted as $\linear(\hil)$.
Quantum states are expressed as positive operators with unit trace, and we denote the set of quantum states on $\hil$ as $\state(\hil)$.
Let $\rho$ be a state in $\state(\hil)$.
A measurement of a state is described by a positive-operator-valued measure (POVM) which is defined as a family of positive operators $\E=\{\E(x)\}_{x\in\Omega}$ such that $\sum_{x\in\Omega}\E(x)=I$, where $\Omega$ is a set of possible outcomes.
We assume that all outcome sets are finite, and hence $\Omega$ is also finite.
If $\rho$ is measured using $\E$, the probability of obtaining an outcome $x$ is given by
\begin{align}
p(x|\rho)
=\tr\left[\E(x)\rho\right],
\end{align}
which is well known as the Born rule.

Let us consider POVMs $\E^{(l)}=\{\E^{(l)}(x^{(l)})\}_{x^{(l)}\in\Omega_{l}},l=1,\ldots,L$, where $L$ is a positive integer and represents the number of measurements.
We assume that $\Omega_{l}\cap\Omega_{l'}=\emptyset$ holds if $l\neq l'$ so that each measurement can be distinguished.
Let $\{r_{l}\}_{l\in[L]_{+}}$ denote a probability distribution, where $[\cdot]_{+}$ is the subset of all integers $\mathbb{Z}$ defined as $[M]_{+}:=\{m\in\mathbb{Z}:1\leq m\leq M\}$ for an arbitrary integer $M$.
We consider the scenario where a state $\rho$ is measured using $\E^{(l)}$ with the probability $r_{l}$.
The probability of obtaining an outcome $x^{(l)}$ is then given by
\begin{align}
r_{l}p_{l}(x^{(l)}|\rho)
=r_{l}\tr\left[\E^{(l)}(x^{(l)})\rho\right].
\end{align}
The FGUR proposed in Ref. \citen{doi:10.1126/science.1192065} is the set of inequalities expressed as follows:
\begin{align}
\mathcal{U}
=\left\{\sum^{L}_{l=1}r_{l}p_{l}(x^{(l)}|\rho)
\leq b(\bm{x}):\bm{x}:=(x^{(1)},\ldots,x^{(L)})\in\prod^{L}_{l=1}\Omega_{l}\right\}. \label{eq:fg}
\end{align}
The bound is defined as follows:
\begin{align}
b(\bm{x})=\max_{\rho}\left\{\sum^{L}_{l=1}r_{l}p_{l}(x^{(l)}|\rho)\right\}, \label{eq:conv_fgur_bound}
\end{align}
where the maximization is performed over all states in $\state(\hil)$.
Each inequality in Eq. \ref{eq:fg} has a clear operational interpretation: it quantifies the limit of the probability of obtaining one of the outcomes in a sequence $\bm{x}\in\prod^{L}_{l=1}\Omega_{l}$.
In other words, each inequality in $\mathcal{U}$ imposes a restriction on the probability associated with a combination of outcomes.
This feature distinguishes FGURs from other uncertainty relations, such as Robertson's inequality, EURs, and MURs, which do not provide this level of detail.
Specifically, consider a combination of outcomes $\bm{x}\in\prod^{L}_{l=1}\Omega_{l}$.
As noted in Ref. \citen{Rastegin2015}, if the strict inequality
\begin{align}
b(\bm{x})
<\sum^{L}_{l=1}r_{l}\max_{\rho}\left\{p_{l}(x^{(l)}|\rho)\right\} \label{eq:inc_cond}
\end{align}
holds, where the meaning of the maximization is the same as that in Eq. (\ref{eq:conv_fgur_bound}), the probabilities
\begin{align}
p_{1}(x^{(1)}|\rho),\ldots,p_{L}(x^{(L)}|\rho)
\end{align}
cannot simultaneously achieve their maximum values, indicating an unavoidable trade-off between the probabilities.
Specifically, if each $\E^{(l)}$ is a projection-valued measure (PVM), the right-hand side of (\ref{eq:inc_cond}) becomes $1$.
Thus, an unavoidable trade-off exists if $b(\bm{x})<1$.

The bound can be expressed as follows:
\begin{align}
b(\bm{x})
=\normOp{\sum^{L}_{l=1}r_{l}\E^{(l)}(x^{(l)})},
\end{align}
where $\normOp{\cdot}$ denotes the operator norm defined as
\begin{align}
\normOp{A}=\sup_{\ket{\psi}:\norm{\ket{\psi}}\neq0}\frac{\norm{A\ket{\psi}}}{\norm{\ket{\psi}}}
\end{align}
for an arbitrary linear operator $A\in\linear(\hil)$.
For a positive operator, the operator norm is equal to the largest eigenvalue.
Thus, $b(\bm{x})$ is given by the largest eigenvalue of $\sum^{L}_{l=1}r_{l}\E^{(l)}(x^{(l)})$.

As an example, for every $l\in[L]_{+}$, suppose that $\Omega_{l}=\{x^{(l)}_{0},\ldots,x^{(l)}_{d-1}\}$, and each element of $\E_{l}$ is given by
\begin{align}
\E^{(l)}(x^{(l)}_{i})&=\ketbra{e^{(l)}_{i}}{e^{(l)}_{i}},
\end{align}
where $\{\ket{e^{(l)}_{i}}\}_{i\in[d-1]}$ is an orthonormal basis.
The symbol $[\cdot]$ denotes the subset of $\mathbb{Z}$ defined as $[N]:=\{i\in\mathbb{Z}:0\leq i\leq N\}$ for an arbitrary integer $N$.
If $L=2$ and $r_{1}=r_{2}=1/2$, the bound is represented as
\begin{align}
b(x^{(1)}_{i},x^{(2)}_{j})
=\frac{1}{2}\left(1+\abs{\braket{e^{(1)}_{i}|e^{(2)}_{j}}}\right) \label{eq:fgur_bound}
\end{align}
for every $i,j\in[d-1]$ \cite{PhysRevA.104.032424,PhysRevA.109.022408}.
In particular, if $\{\ket{e^{(1)}_{i}}\}_{i\in[d-1]}$ and $\{\ket{e^{(2)}_{i}}\}_{i\in[d-1]}$ are mutually unbiased, i.e., the equality $\abs{\braket{e^{(1)}_{i}|e^{(2)}_{j}}}=1/\sqrt{d}$ holds for every $i,j\in[d-1]$, the bound becomes
\begin{align}
b(x^{(1)}_{i},x^{(2)}_{j})
=\frac{1}{2}\left(1+\frac{1}{\sqrt{d}}\right).
\end{align}
Hence, the unavoidable trade-off between $p_{1}(x^{(1)}_{i}|\rho)$ and $p_{2}(x^{(2)}_{j}|\rho)$ exists for all $i,j\in[d-1]$.

\section{Quantum Testers} \label{sec:tester}
We use the framework for measuring quantum channels developed in Refs. \citen{PhysRevLett.101.180501} and \citen{PhysRevA.77.062112} to generalize the FGURs.

Let $\hil_{\sysi}$ and $\hil_{\syso}$ be Hilbert spaces with dimensions $d_{\sysi}$ and $d_{\syso}$, respectively.
Quantum processes that transform states in $\state(\hil_{\sysi})$ to states in $\state(\hil_{\syso})$ are identified with (quantum) channels, which are completely positive and trace preserving linear maps from $\linear(\hil_{\sysi})$ to $\linear(\hil_{\syso})$.
We denote by $\mathcal{C}(\linear(\hil_{\sysi}),\linear(\hil_{\syso}))$ the set of all channels from $\linear(\hil_{\sysi})$ to $\linear(\hil_{\syso})$.
Let $\Lambda$ be a channel in $\mathcal{C}(\linear(\hil_{\sysi}),\linear(\hil_{\syso}))$.
The information about $\Lambda$ is obtained by inputting a known state into $\Lambda$ and performing a known POVM measurement on the resulting output state.
We refer to these procedures as tests of $\Lambda$.
Specifically, a test of $\Lambda$ is carried out as follows:
\begin{enumerate}
\item Prepare a quantum state $\rho$ on $\hil_{\sysa}\otimes\hil_{\sysi}$, where $\hil_{\sysa}$ is an ancillary system with dimension $d_{\sysa}$.
\item Apply $\id_{\sysa}\otimes\Lambda$ to $\rho$, where $\id_{\sysa}$ is the identity map on $\linear(\hil_{\sysa})$.
\item Perform a measurement of the output state, i.e., $(\id_{\sysa}\otimes\Lambda)(\rho)$, using a POVM $\E=\{\E(x)\}_{x\in\Omega}$.
\end{enumerate}
Thus, the test is defined by the pair $(\rho,\E)$.
The probability of obtaining an outcome $x\in\Omega$ is calculated as follows:
\begin{align}
p(x|\Lambda)
&=\tr\left[\E(x)(\id_{\sysa}\otimes\Lambda)(\rho)\right].
\end{align}
We can simplify the right-hand side through straightforward computation.
First, for an arbitrary state $\ket{\Psi}$ in $\hil_{\sysa}\otimes\hil_{\sysi}$, consider the linear operator defined as follows:
\begin{align}
K_{\Psi}=\sum^{d_{\sysa}-1}_{a=0}\sum^{d_{\sysi}-1}_{i=0}\ket{a}\braket{a\otimes i|\Psi}\bra{i},
\end{align}
where $\{\ket{a}\}_{a\in[d_{\sysa}-1]}$ and $\{\ket{i}\}_{i\in[d_{\sysi}-1]}$ are orthonormal bases for $\hil_{\sysa}$ and $\hil_{\sysi}$, respectively.
Let $\Upsilon_{\Psi}$ be the completely positive map defined as
\begin{align}
\Upsilon_{\Psi}(A)=K_{\Psi}AK^{\dagger}_{\Psi}
\end{align}
for all $A\in\linear(\hil_{\sysi})$.
The following equality can be easily verified:
\begin{align}
\ketbra{\Psi}{\Psi}=\left(\Upsilon_{\Psi}\otimes\id_{\sysi}\right)(P'_{+}),
\end{align}
where $P'_{+}$ is the unnormalized maximally entangled state defined as
\begin{align}
P'_{+}=\sum^{d_{\sysi}-1}_{i,j=0}\ketbra{i}{j}\otimes\ketbra{i}{j}.
\end{align}
Suppose that $\rho$ can be written as a convex combination of pure states as follows:
\begin{align}
\rho=\sum^{N}_{n=1}\lambda_{n}\ketbra{\Psi_{n}}{\Psi_{n}}.
\end{align}
Let $\Upsilon_{\rho}$ be the completely positive map defined as
\begin{align}
\Upsilon_{\rho}:=\sum^{N}_{n=1}\lambda_{n}\Upsilon_{\Psi_{n}}.
\end{align}
Clearly, the following equality holds:
\begin{align}
\rho=\left(\Upsilon_{\rho}\otimes\id_{\sysi}\right)(P'_{+}). \label{eq:ups}
\end{align}

Using Eq. (\ref{eq:ups}), the probability of obtaining an outcome $x\in\Omega$ can be rewritten as follows:
\begin{align}
p(x|\Lambda)
&=\tr\left[\E(x)(\id_{\sysa}\otimes\Lambda)(\rho)\right] \\
&=\tr\left[\E(x)(\id_{\sysa}\otimes\Lambda)(\Upsilon_{\rho}\otimes\id_{\sysi})(P'_{+})\right] \\
&=\tr\left[\E(x)(\Upsilon_{\rho}\otimes\id_{\syso})(\id_{\sysi}\otimes\Lambda)(P'_{+})\right] \\
&=\tr\left[(\Upsilon^{*}_{\rho}\otimes\id_{\syso})(\E(x))(\id_{\sysi}\otimes\Lambda)(P'_{+})\right],
\end{align}
where $\Upsilon^{*}_{\rho}$ is the dual map of $\Upsilon_{\rho}$, which is defined such that $\tr\left[A^{\dagger}\Upsilon_{\rho}(B)\right]=\tr\left[\Upsilon^{*}_{\rho}(A)^{\dagger}B\right]$ holds for every $A\in\linear(\hil_{\sysa})$ and $B\in\linear(\hil_{\sysi})$.
Consequently, by letting
\begin{align}
\T(x)&=(\Upsilon^{*}_{\rho}\otimes\id_{\syso})(\E(x)), \label{eq:def_pc_eff} \\
J_{\Lambda}&=(\id_{\sysi}\otimes\Lambda)(P'_{+}),
\end{align}
the probability is expressed as follows:
\begin{align}
p(x|\Lambda)=\tr\left[\T(x)J_{\Lambda}\right]. \label{eq:born_ch}
\end{align}
It is important to note that $\T(x)$ depends on $\rho$ and $\E(x)$, though this dependency is not explicitly shown.
All information about the objects used to test $\Lambda$, namely, $\rho$ and $\E$, is captured in $\T=\{\T(x)\}_{x\in\Omega}$.
Thus, we can consider $\T$ as a mathematical representation of the test.
The properties of $\T$ are similar to those of POVMs.
Indeed, each $\T(x)$ is positive because $\Upsilon^{*}_{\rho}$ is completely positive.
However, the sum of all elements does not necessarily equal the identity because the following equality holds:
\begin{align}
\Upsilon^{*}_{\rho}(I_{\sysa})=\rho^{T}_{\sysi}, \label{eq:ups_ad_id}
\end{align}
where $\rho_{\sysi}=\tr_{\sysa}\rho$ and $^{T}$ denotes the transpose with respect to $\{\ket{i}\}_{i\in[d_{\sysi}-1]}$.
Hence, the sum of $\T(x)$ becomes
\begin{align}
\sum_{x\in\Omega}\T(x)=\rho^{T}_{\sysi}\otimes I_{\syso}. \label{eq:norma_ppovm}
\end{align}
In general, a family of positive operators that satisfies Eq. (\ref{eq:norma_ppovm}) with some state on $\hil_{\sysi}$ is referred to as a (quantum) tester\cite{PhysRevLett.101.180501} or process POVM (PPOVM)\cite{PhysRevA.77.062112}.

The Choi--Jamiołkowski isomorphism\cite{JAMIOLKOWSKI1972275,CHOI1975285} states that $J_{\Lambda}$ is positive and that the map $\Lambda\mapsto J_{\Lambda}$ is bijective.
Consequently, we can identify $J_{\Lambda}$ with $\Lambda$.
The operator $J_{\Lambda}$ is known as the Choi operator.

If $d_{\sysa}=d_{\sysi}=1$, without loss of generality, we can assume the equalities
\begin{align}
\rho&=I_{\sysa}\otimes I_{\sysi}, \\
\E(x)&=I_{\sysa}\otimes \E_{\syso}(x), \\
\Lambda(I_{\sysi})&=\rho_{\syso},
\end{align}
where $\rho_{\syso}$ is a quantum state on $\hil_{\syso}$ and $\{\E_{\syso}(x)\}_{x\in\Omega}$ is a POVM on $\hil_{\syso}$.
The tester becomes
\begin{align}
\T=\{I_{\sysi}\otimes\E_{\syso}(x)\}_{x\in\Omega},
\end{align}
which is a POVM, and the Choi operator becomes
\begin{align}
J_{\Lambda}=I_{\sysi}\otimes\rho_{\syso},
\end{align}
which is a state.
Furthermore, the right-hand side of Eq. (\ref{eq:born_ch}) becomes
\begin{align}
\tr\left[\E_{\syso}(x)\rho_{\syso}\right].
\end{align}
Therefore, the framework presented here encompasses measurements of states as special cases, and Eq. (\ref{eq:born_ch}) can be viewed as a generalized Born rule.

\section{Results} \label{sec:result}
\subsection{Generalization of the FGURs}
We can apply Eq. (\ref{eq:fg}) not only to measurements of states but also to tests of channels.
In this context, POVMs and state are replaced by testers and Choi operator, respectively.

Let $\hil_{\sysi}$ and $\hil_{\syso}$ be Hilbert spaces, and $\Lambda$ be a channel in $\mathcal{C}(\linear(\hil_{\sysi}),\linear(\hil_{\syso}))$.
We consider $L$ tests of $\Lambda$ that are defined by $(\rho^{(l)},\{\E^{(l)}(x^{(l)})\}_{x^{(l)}\in\Omega_{l}})$, where $l\in[L]_{+}$ denotes the test number.
For each $l\in[L]_{+}$, $\rho^{(l)}$ is a state in $\state(\hil_{\sysa}\otimes\hil_{\sysi})$, and $\{\E^{(l)}(x^{(l)})\}_{x^{(l)}\in\Omega_{l}}$ is a POVM on $\hil_{\sysa}\otimes\hil_{\syso}$.
Then the tester is given by
\begin{align}
\T^{(l)}
&=\{(\Upsilon^{*}_{\rho^{(l)}}\otimes\id_{\sysi})(\E^{(l)}(x^{(l)}))\}_{x^{(l)}\in\Omega_{l}} \\
&=\{\T^{(l)}(x^{(l)})\}_{x^{(l)}\in\Omega_{l}}.
\end{align}
It is assumed that $\Omega_{l}\cap\Omega_{l'}=\emptyset$ if $l\neq l'$.
Let $\{r_{l}\}_{l\in[L]_{+}}$ be a probability distribution, and suppose that $\Lambda$ is tested using $\T^{(l)}$ with the probability $r_{l}$.
According to Eq. (\ref{eq:born_ch}), the probability of obtaining an outcome $x^{(l)}$ is given by
\begin{align}
r_{l}p_{l}(x^{(l)}|\Lambda)=r_{l}\tr\left[\T^{(l)}(x^{(l)})J_{\Lambda}\right].
\end{align}
We consider the following set of inequalities:
\begin{align}
\mathcal{V}
=\left\{\sum^{L}_{l=1}r_{l}p_{l}(x^{(l)}|\Lambda)\leq c(\bm{x}):\bm{x}:=(x^{(1)},\ldots,x^{(L)})\in\prod^{L}_{l=1}\Omega_{l}\right\}. \label{eq:fg_ch}
\end{align}
The bound is defined as
\begin{align}
c(\bm{x})
=\max_{\Lambda}\left\{\sum^{L}_{l=1}r_{l}p_{l}(x^{(l)}|\Lambda)\right\}, \label{eq:b}
\end{align}
where the maximization is taken over all channels in $\mathcal{C}(\linear(\hil_{\sysi}),\linear(\hil_{\syso}))$.
Similar to $\mathcal{U}$, each inequality in $\mathcal{V}$ represents the limit of the probability of obtaining one of the outcomes in a sequence $\bm{x}\in\prod^{L}_{l=1}\Omega_{l}$.
For all $\bm{x}\in\prod^{L}_{l=1}\Omega_{l}$, we also consider the quantity
\begin{align}
t(\bm{x}):=\sum^{L}_{l=1}r_{l}\max_{\Lambda}p_{l}(x^{(l)}|\Lambda),
\end{align}
where the maximization has the same meaning as in Eq. (\ref{eq:b}).
For a combination of outcomes $\bm{x}\in\prod^{L}_{l=1}\Omega_{l}$, if the strict inequality
\begin{align}
c(\bm{x})<t(\bm{x}) \label{eq:triv}
\end{align}
holds, we can conclude, as in Sect. \ref{sec:fg} that an unavoidable trade-off exists between the probabilities
\begin{align}
p_{l}(x^{(1)}|\Lambda),\ldots,p_{L}(x^{(L)}|\Lambda).
\end{align}
In particular, when $t(\bm{x})=1$, a trade-off occurs if $c(\bm{x})<1$.
Therefore, $t(\bm{x})$ can be regarded as a trivial bound.

If $d_{\sysa}=d_{\sysi}=1$, $\mathcal{V}$ and $c(\bm{x})$ become
\begin{align}
\mathcal{V}
=\left\{\sum^{L}_{l=1}r_{l}\tr[\E^{(l)}_{\syso}(x_{l})\rho_{\syso}]\leq c(\bm{x}):\bm{x}\in\prod^{L}_{l=1}\Omega_{l}\right\} \label{eq:fgur_spe}
\end{align}
and
\begin{align}
c(\bm{x})
=\max_{\rho_{\syso}}\left\{\sum^{L}_{l=1}r_{l}\tr[\E^{(l)}_{\syso}(x^{(l)})\rho_{\syso}]\right\},
\end{align}
respectively.
Here, $\{\E^{(l)}_{\syso}(x^{(l)})\}_{x^{(l)}\in\Omega_{l}}$ is a POVM on $\hil_{\syso}$ for each $l\in[L]_{+}$, and $\rho_{\syso}$ is a state on $\hil_{\syso}$.
The maximization is taken over all states on $\hil_{\syso}$.
Equation (\ref{eq:fgur_spe}) corresponds to Eq. (\ref{eq:fg}), making $\mathcal{V}$ a conceptual generalization of $\mathcal{U}$.
The set $\mathcal{V}$ is defined using the testers, which is why we refer to sets of inequalities in the form of Eq. (\ref{eq:fg_ch}) as FGURs for quantum testers.

\subsection{Estimation of the uncertainty bound}
The uncertainty bound $c(\bm{x})$ is defined as an optimization problem, making its exact analytical computation challenging.
However, we can instead calculate an upper bound on the probability associated with a combination of outcomes using the following proposition.
\begin{pro} \label{pro:b_est}
For an arbitrary $\bm{x}\in\prod^{L}_{l=1}\Omega_{l}$, the following inequality holds:
\begin{align}
c(\bm{x})
\leq \ctil(\bm{x}), \label{eq:b_ineq}
\end{align}
where the right-hand side is defined as
\begin{align}
\ctil(\bm{x})
:=d_{\sysi}\normOp{\sum^{L}_{l=1}r_{l}\T^{(l)}(x^{(l)})}.
\end{align}
Equation (\ref{eq:b_ineq}) becomes equality if $\tr_{\syso}\ketbra{\Psi_{0}}{\Psi_{0}}=I_{\sysi}/d_{\sysi}$, where $\ket{\Psi_{0}}$ denotes an eigenstate of $\sum^{L}_{l=1}r_{l}\T^{(l)}(x^{(l)})$ corresponding to the largest eigenvalue.
\end{pro}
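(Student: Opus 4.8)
The plan is to recast the optimization defining $c(\bm{x})$ as a maximization over Choi operators and then control it via the operator norm. Write $T:=\sum_{l=1}^{L}r_{l}\T^{(l)}(x^{(l)})$, a positive operator on $\hil_{\sysi}\otimes\hil_{\syso}$. By Eq.~(\ref{eq:born_ch}), $\sum_{l=1}^{L}r_{l}p_{l}(x^{(l)}|\Lambda)=\tr[TJ_{\Lambda}]$. By the Choi--Jamio\l{}kowski isomorphism, the map $\Lambda\mapsto J_{\Lambda}$ is a bijection between channels and operators $J$ on $\hil_{\sysi}\otimes\hil_{\syso}$ satisfying $J\geq0$ (complete positivity) and $\tr_{\syso}J=I_{\sysi}$ (trace preservation). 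Hence $c(\bm{x})=\max\{\tr[TJ]:J\geq0,\ \tr_{\syso}J=I_{\sysi}\}$.

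For the inequality, I would combine $T\leq\normOp{T}\,I_{\sysi}\otimes I_{\syso}$ with $J\geq0$ to get $\tr[TJ]\leq\normOp{T}\tr[J]$, and then use $\tr[J]=\tr[\tr_{\syso}J]=\tr[I_{\sysi}]=d_{\sysi}$, which holds for every feasible $J$. This yields $c(\bm{x})\leq d_{\sysi}\normOp{T}=\ctil(\bm{x})$.

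For the equality case, I would exhibit a feasible $J$ saturating the bound. Let $\ket{\Psi_{0}}$ be a normalized eigenvector of $T$ for its largest eigenvalue and set $J_{0}:=d_{\sysi}\ketbra{\Psi_{0}}{\Psi_{0}}$. Then $\tr[TJ_{0}]=d_{\sysi}\bra{\Psi_{0}}T\ket{\Psi_{0}}=d_{\sysi}\normOp{T}$, so it only remains to check that $J_{0}$ is a valid Choi operator. Positivity is clear, and $\tr_{\syso}J_{0}=d_{\sysi}\tr_{\syso}\ketbra{\Psi_{0}}{\Psi_{0}}=I_{\sysi}$ exactly under the hypothesis $\tr_{\syso}\ketbra{\Psi_{0}}{\Psi_{0}}=I_{\sysi}/d_{\sysi}$. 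Thus $J_{0}=J_{\Lambda_{0}}$ for some channel $\Lambda_{0}$, which competes in Eq.~(\ref{eq:b}), giving $c(\bm{x})\geq d_{\sysi}\normOp{T}$ and hence equality.

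The argument is short, so there is no serious obstacle; the only point needing care is the first step---verifying that the feasible region of the maximization is precisely $\{J\geq0:\tr_{\syso}J=I_{\sysi}\}$. Complete positivity $\Leftrightarrow J_{\Lambda}\geq0$ is the Choi--Jamio\l{}kowski statement quoted above, while trace preservation $\Leftrightarrow\tr_{\syso}J_{\Lambda}=I_{\sysi}$ follows from $\tr_{\syso}(\id_{\sysi}\otimes\Lambda)(P'_{+})=(\id_{\sysi}\otimes(\tr\circ\Lambda))(P'_{+})$ together with $\tr\circ\Lambda=\tr$. One may also remark that the equality hypothesis can only be met when $d_{\syso}\geq d_{\sysi}$, since it forces $\ket{\Psi_{0}}$ to be maximally entangled across the $\sysi$--$\syso$ cut.
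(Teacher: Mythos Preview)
Your proof is correct and follows essentially the same approach as the paper: both recast $c(\bm{x})$ as $\max_{\Lambda}\tr[TJ_{\Lambda}]$ with $T=\sum_{l}r_{l}\T^{(l)}(x^{(l)})$, bound this by $d_{\sysi}\normOp{T}$, and use the Choi--Jamio\l{}kowski isomorphism to produce the saturating channel $\Lambda_{0}$ with $J_{\Lambda_{0}}=d_{\sysi}\ketbra{\Psi_{0}}{\Psi_{0}}$. Your execution of the inequality step is in fact slightly cleaner---you use $T\leq\normOp{T}I$ directly, whereas the paper passes through the spectral decomposition of $J_{\Lambda}/d_{\sysi}$---and your closing remark that the equality hypothesis forces $d_{\syso}\geq d_{\sysi}$ is a correct observation not made in the paper.
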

\begin{proof}
The objective function in Eq. (\ref{eq:b}) can be bounded from above as follows:
\begin{align}
\sum^{L}_{l=1}r_{l}\sum_{x^{(l)}\in\mathcal{X}_{l}}p_{l}(x^{(l)}|\Lambda)
&=d_{\sysi}\tr\left[\sum^{L}_{l=1}r_{l}\T^{(l)}(x^{(l)})\frac{J_{\Lambda}}{d_{\sysi}}\right] \\
&=d_{\sysi}\sum^{d_{\sysi}d_{\syso}-1}_{i=0}\mu_{i}\braket{\mu_{i}|\sum^{L}_{l=1}r_{l}\T^{(l)}(x^{(l)})|\mu_{i}} \\
&\leq d_{\sysi}\sum^{d_{\sysi}d_{\syso}-1}_{i=0}\mu_{i}\max_{i}\braket{\mu_{i}|\sum^{L}_{l=1}r_{l}\T^{(l)}(x^{(l)})|\mu_{i}} \\
&=d_{\sysi}\max_{i}\braket{\mu_{i}|\sum^{L}_{l=1}r_{l}\T^{(l)}(x^{(l)})|\mu_{i}} \\
&\leq d_{\sysi}\max_{\ket{\Psi}}\braket{\Psi|\sum^{L}_{l=1}r_{l}\T^{(l)}(x^{(l)})|\Psi} \\
&=d_{\sysi}\normOp{\sum^{L}_{l=1}r_{l}\T^{(l)}(x^{(l)})},
\end{align}
where the maximization on the right-hand side of the second inequality is performed over all unit vectors in $\hil_{\sysi}\otimes\hil_{\syso}$.
To derive the second equality, we use the fact that $J_{\Lambda}/d$ is a state on $\hil_{\sysi}\otimes\hil_{\syso}$, as stated by the Choi--Jamiołkowski isomorphism, along with the spectral decomposition $J_{\Lambda}/d_{\sysi}=\sum^{d_{\sysi}d_{\syso}-1}_{i=0}\mu_{i}\ketbra{\mu_{i}}{\mu_{i}}$, where $\mu_{i}$ are the eigenvalues and $\ket{\mu_{i}}$ are the corresponding eigenstates.
The final equality follows because the inequality
\begin{align}
\braket{\Psi|\sum^{L}_{l=1}r_{l}\T^{(l)}(x^{(l)})|\Psi}
\leq\normOp{\sum^{L}_{l=1}r_{l}\T^{(l)}(x^{(l)})}
\end{align}
holds for any $\ket{\Psi}\in\hil_{\sysi}\otimes\hil_{\syso}$, and this inequality becomes an equality if $\ket{\Psi}$ is an eigenvector corresponding to the largest eigenvalue of $\sum^{L}_{l=1}r_{l}\T^{(l)}(x^{(l)})$.
Thus, Eq. (\ref{eq:b_ineq}) follows.

The Choi--Jamiołkowski isomorphism guarantees the existence of a channel $\Lambda_{0}\in\mathcal{C}(\linear(\hil_{\sysi}),\linear(\hil_{\syso}))$ such that $J_{\Lambda_{0}}/d_{\sysi}=\ketbra{\Psi_{0}}{\Psi_{0}}$ if $\tr_{\syso}\ketbra{\Psi_{0}}{\Psi_{0}}=I_{\sysi}/d_{\sysi}$.
In this case, the following equality holds:
\begin{align}
\sum^{L}_{l=1}r_{l}\sum_{x^{(l)}\in\mathcal{X}_{l}}p_{l}(x^{(l)}|\Lambda_{0})
=d_{\sysi}\normOp{\sum^{L}_{l=1}r_{l}\T^{(l)}(x^{(l)})}.
\end{align}
Thus, the last assertion of the proposition follows.
\end{proof}
The upper bound $\ctil(\bm{x})$ can be used to quantify the trade-off instead of $c(\bm{x})$ if $\ctil(\bm{x})<t(\bm{x})$ because it implies $c(\bm{x})\leq \ctil(\bm{x})<t(\bm{x})$.
Hence, we consider the following set of inequalities:
\begin{align}
\vtil
=\left\{\sum^{L}_{l=1}r_{l}p_{l}(x^{(l)}|\Lambda)\leq \ctil(\bm{x}):\bm{x}\in\prod^{L}_{l=1}\Omega_{l}\right\}. \label{eq:fgur_ch_est}
\end{align}
If $d_{\sysa}=d_{\sysi}=1$, i.e., the tests correspond to measurements of states, then the equality $\vtil=\mathcal{V}$ holds because Eq. (\ref{eq:b_ineq}) becomes an equality.
Therefore, in this case, $\vtil$ can be identified with $\mathcal{U}$, similar to $\mathcal{V}$.
Thus, although $\vtil$ is formulated using the upper bound $\ctil(\bm{x})$, it is sufficiently strong to serve as a generalization of the conventional FGURs.
We also refer to sets of inequalities in the form given by Eq. (\ref{eq:fgur_ch_est}) as FGURs for quantum testers.

To illustrate when Proposition \ref{pro:b_est} is effectively applied, we consider the following example.
\begin{exmp}
Suppose that $d_{\sysa}=1,L=2,r_{1}=r_{2}=1/2$ and for each $l\in[2]_{+}$, the input state is given by
\begin{align}
\rho^{(l)}=I_{\sysa}\otimes\ketbra{\psi^{(l)}}{\psi^{(l)}},
\end{align}
where $\ketbra{\psi^{(l)}}{\psi^{(l)}}$ is a state on $\hil_{\sysi}$.
We also assume that $\Omega_{l}=\{x^{(l)}_{0},\ldots,x^{(l)}_{d_{\syso}-1}\}$, and each element of $\E^{(l)}$ is given as follows:
\begin{align}
\E^{(l)}(x^{(l)}_{i})=I_{\sysa}\otimes\ketbra{e^{(l)}_{i}}{e^{(l)}_{i}},
\end{align}
where $\{\ket{e^{(l)}_{i}}\}_{i\in[d_{\syso}-1]}$ is an orthonormal basis for all $l\in[2]_{+}$.
In this case, the equality $\Upsilon^{*}_{\rho^{(l)}}(I_{\sysa})=\ketbra{\psi^{(l)}}{\psi^{(l)}}^{T}$ holds.
Thus, each element of the testers becomes
\begin{align}
\T^{(l)}(x^{(l)}_{i})=\ketbra{\psi^{(l)}}{\psi^{(l)}}^{T}\otimes\ketbra{e^{(l)}_{i}}{e^{(l)}_{i}}.
\end{align}
The upper bound is calculated as
\begin{align}
\ctil(x^{(1)}_{i},x^{(2)}_{j})
=\frac{d_{\sysi}}{2}\left(1+\abs{\braket{\psi^{(1)}|\psi^{(2)}}}\abs{\braket{e^{(1)}_{i}|e^{(2)}_{j}}}\right)
\end{align}
for every $i,j\in[d_{\syso}-1]$.
Clearly, if $d_{\sysi}\geq2$, the right-hand side is greater than or equal to $1$.
This implies that $\ctil(x^{(1)}_{i},x^{(2)}_{j})$ becomes meaningless because $\ctil(x^{(1)}_{i},x^{(2)}_{j})\geq t(x^{(1)}_{i},x^{(2)}_{j})$ holds in such cases.
\end{exmp}

Based on this example, it is useful to derive a condition which ensures that $\ctil(\bm{x})\leq1$ holds.
\begin{pro} \label{pro:est_le_1}
If $\tr_{\sysa}\rho^{(l)}=I_{\sysi}/d_{\sysi}$ for all $l\in[L]_{+}$, the inequality $\ctil(\bm{x})\leq1$ holds for an arbitrary $\bm{x}\in\prod^{L}_{l=1}\Omega_{l}$.
\end{pro}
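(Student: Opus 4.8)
The plan is to reduce everything to the normalization identity for testers, Eq.~(\ref{eq:norma_ppovm}), combined with positivity of the individual tester elements. The point of the hypothesis $\tr_{\sysa}\rho^{(l)}=I_{\sysi}/d_{\sysi}$ is precisely that it makes the right-hand side of Eq.~(\ref{eq:norma_ppovm}) proportional to the identity.

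First I would substitute the hypothesis into Eq.~(\ref{eq:norma_ppovm}): since $\rho^{(l)}_{\sysi}=\tr_{\sysa}\rho^{(l)}=I_{\sysi}/d_{\sysi}$ and the transpose of the identity is again the identity, one gets $\sum_{x^{(l)}\in\Omega_{l}}\T^{(l)}(x^{(l)})=\frac{1}{d_{\sysi}}I_{\sysi}\otimes I_{\syso}$ for every $l\in[L]_{+}$. Second, I would invoke the fact (noted in Sect.~\ref{sec:tester}) that each $\T^{(l)}(x^{(l)})$ is a positive operator, because $\Upsilon^{*}_{\rho^{(l)}}$ is completely positive and $\E^{(l)}(x^{(l)})\geq0$. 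Since a sum of positive operators dominates each of its summands in the Löwner order, it follows that $0\leq\T^{(l)}(x^{(l)})\leq\frac{1}{d_{\sysi}}I_{\sysi}\otimes I_{\syso}$, and hence $\normOp{\T^{(l)}(x^{(l)})}\leq1/d_{\sysi}$ for all $l$ and $x^{(l)}$.

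Third, I would take the convex combination with weights $r_{l}$: because each $\T^{(l)}(x^{(l)})\leq\frac{1}{d_{\sysi}}I_{\sysi}\otimes I_{\syso}$ and $\sum_{l=1}^{L}r_{l}=1$, we obtain $0\leq\sum_{l=1}^{L}r_{l}\T^{(l)}(x^{(l)})\leq\frac{1}{d_{\sysi}}I_{\sysi}\otimes I_{\syso}$, so $\normOp{\sum_{l=1}^{L}r_{l}\T^{(l)}(x^{(l)})}\leq1/d_{\sysi}$. Multiplying through by $d_{\sysi}$ gives exactly $\ctil(\bm{x})=d_{\sysi}\normOp{\sum_{l=1}^{L}r_{l}\T^{(l)}(x^{(l)})}\leq1$, which is the claim; note this holds uniformly in $\bm{x}$ since no property of $\bm{x}$ beyond $\bm{x}\in\prod_{l=1}^{L}\Omega_{l}$ was used.

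I do not expect a real obstacle here. The only step that deserves an explicit (one-line) justification is the assertion that a positive operator occurring as a single summand of a sum equal to $cI$ must itself be $\leq cI$; this is immediate from positivity of the remaining summands together with transitivity of the Löwner order. Everything else is bookkeeping.
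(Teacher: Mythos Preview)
Your proof is correct and follows essentially the same route as the paper: the key point in both is that the hypothesis turns the tester normalization Eq.~(\ref{eq:norma_ppovm}) into $\sum_{x^{(l)}}\T^{(l)}(x^{(l)})=\tfrac{1}{d_{\sysi}}I_{\sysi}\otimes I_{\syso}$, so each $\T^{(l)}(x^{(l)})\leq\tfrac{1}{d_{\sysi}}I$ (the paper packages this as Lemma~\ref{lem:tester_povm}). The only cosmetic difference is that the paper bounds $\ctil(\bm{x})$ via the triangle inequality for $\normOp{\cdot}$ and then bounds each summand, whereas you stay in the L\"owner order and bound the convex combination directly before taking the norm; both arrive at $\ctil(\bm{x})\leq1$ immediately.
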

To show this proposition, we provide the following lemma.
\begin{lem} \label{lem:tester_povm}
Let $\rho$ be a state on $\hil_{\sysa}\otimes\hil_{\sysi}$, and $\{\E(x)\}_{x\in\Omega}$ be a POVM on $\hil_{\sysa}\otimes\hil_{\syso}$.
The family of positive operators $d_{\sysi}\T:=\{d_{\sysi}\left(\Upsilon^{*}_{\rho}\otimes\id_{\syso}\right)(\E(x))\}_{x\in\Omega}$ is a POVM on $\hil_{\sysi}\otimes\hil_{\syso}$ if and only if $\rho_{\sysi}=I_{\sysi}/d_{\sysi}$.
\end{lem}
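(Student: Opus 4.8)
The plan is to unwind the definitions and reduce the claim to a statement about the sum $\sum_{x\in\Omega}\T(x)$, which has already been computed in the excerpt. Recall from Eq.~(\ref{eq:norma_ppovm}) that $\sum_{x\in\Omega}\T(x)=\rho^{T}_{\sysi}\otimes I_{\syso}$, where $\rho_{\sysi}=\tr_{\sysa}\rho$. Multiplying by $d_{\sysi}$ gives $\sum_{x\in\Omega}d_{\sysi}\T(x)=d_{\sysi}\rho^{T}_{\sysi}\otimes I_{\syso}$. Since each $d_{\sysi}\T(x)$ is positive (because $\Upsilon^{*}_{\rho}$ is completely positive, as noted after Eq.~(\ref{eq:born_ch})), the family $d_{\sysi}\T$ is a POVM on $\hil_{\sysi}\otimes\hil_{\syso}$ precisely when this sum equals $I_{\sysi}\otimes I_{\syso}$.

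First I would argue the ``if'' direction: assume $\rho_{\sysi}=I_{\sysi}/d_{\sysi}$. Then $\rho^{T}_{\sysi}=(I_{\sysi}/d_{\sysi})^{T}=I_{\sysi}/d_{\sysi}$, so $\sum_{x}d_{\sysi}\T(x)=d_{\sysi}\cdot(I_{\sysi}/d_{\sysi})\otimes I_{\syso}=I_{\sysi}\otimes I_{\syso}$, and together with positivity of the elements this shows $d_{\sysi}\T$ is a POVM. For the ``only if'' direction, suppose $d_{\sysi}\T$ is a POVM, so that $\sum_{x}d_{\sysi}\T(x)=I_{\sysi}\otimes I_{\syso}$. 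Comparing with the general formula, $d_{\sysi}\rho^{T}_{\sysi}\otimes I_{\syso}=I_{\sysi}\otimes I_{\syso}$; taking the partial trace over $\hil_{\syso}$ (or simply using that $A\otimes I_{\syso}=B\otimes I_{\syso}$ forces $A=B$) yields $d_{\sysi}\rho^{T}_{\sysi}=I_{\sysi}$, hence $\rho^{T}_{\sysi}=I_{\sysi}/d_{\sysi}$, and transposing back gives $\rho_{\sysi}=I_{\sysi}/d_{\sysi}$, as claimed.

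There is no serious obstacle here; the content of the lemma is essentially a repackaging of Eq.~(\ref{eq:norma_ppovm}) together with the positivity of the tester elements and the elementary fact that transposition fixes the maximally mixed state. The only point requiring a line of care is the cancellation of the tensor factor $I_{\syso}$: one should note explicitly that if $X\otimes I_{\syso}=Y\otimes I_{\syso}$ for operators $X,Y$ on $\hil_{\sysi}$, then $X=Y$ (e.g.\ by taking $(1/d_{\syso})\tr_{\syso}$ of both sides), so that the normalization condition on $\hil_{\sysi}\otimes\hil_{\syso}$ is genuinely equivalent to the normalization condition $d_{\sysi}\rho^{T}_{\sysi}=I_{\sysi}$ on $\hil_{\sysi}$ alone.
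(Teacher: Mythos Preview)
Your proposal is correct and follows essentially the same route as the paper: both reduce the question to the identity $\sum_{x\in\Omega}d_{\sysi}\T(x)=d_{\sysi}\rho^{T}_{\sysi}\otimes I_{\syso}$ from Eq.~(\ref{eq:norma_ppovm}) and observe that this equals $I_{\sysi}\otimes I_{\syso}$ if and only if $d_{\sysi}\rho^{T}_{\sysi}=I_{\sysi}$. Your version is simply more explicit about positivity, the two directions of the biconditional, and the cancellation of the $I_{\syso}$ factor, none of which the paper spells out.
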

\begin{proof}
The following equality holds:
\begin{align}
\sum_{x\in\Omega}d_{\sysi}\left(\Upsilon^{*}_{\rho}\otimes\id_{\syso}\right)(\E(x))
=d_{\sysi}\rho^{T}_{\sysi}\otimes I_{\syso}.
\end{align}
Hence, the family $d_{\sysi}\T$ is a POVM if and only if $d_{\sysi}\rho^{T}_{\sysi}=I_{\sysi}$.
\end{proof}

Now, let us proceed with the proof of Proposition \ref{pro:est_le_1}.
\begin{proof}[Proof of Proposition \ref{pro:est_le_1}]
The following chain of inequalities holds:
\begin{align}
\ctil(\bm{x})
&\leq\sum^{L}_{l=1}r_{l}\normOp{d_{\sysi}\T^{(l)}(x^{(l)})} \\
&\leq\sum^{L}_{l=1}r_{l} \\
&=1,
\end{align}
where the first inequality is obtained using the triangle inequality.
The second inequality follows from Lemma \ref{lem:tester_povm}, which guarantees that all $d_{\sysi}\T^{(l)}$ are POVMs.
\end{proof}

\subsection{FGURs for tests with a maximally entangled input state}
A straightforward scenario that satisfies the assumption of Proposition \ref{pro:est_le_1} is to choose the maximally entangled state $P_{+}:=P'_{+}/d_{\sysi}$ as $\rho^{(l)}$.
This approach is commonly used in ancilla-assisted quantum process tomography\cite{PhysRevLett.91.047902,PhysRevLett.90.193601}.

Therefore, from now on, we assume $\hil_{\sysa}=\hil_{\sysi}=\hil$, where $\hil$ is a Hilbert space of the dimension $d$, and $\rho^{(l)}=P_{+}$ for all $l\in[L]_{+}$.
Owing to Proposition \ref{pro:est_le_1}, $\ctil(\bm{x})\leq1$ is ensured.
As mentioned earlier, to assert the existence of an unavoidable trade-off for $\bm{x}\in\prod^{L}_{l=1}\Omega_{l}$, the strict inequality $\ctil(\bm{x})<t(\bm{x})$ is required.
\begin{exmp}
Suppose that $L=2,r_{1}=r_{2}=1/2$, and the outcome set is given by $\Omega_{l}=\{x^{(l)}_{i,j}\}$, where $i\in[d-1]$ and $j\in[d_{\syso}-1]$.
Additionally, suppose that each element of $\E^{(l)}$ is expressed as follows:
\begin{align}
\E^{(l)}(x^{(l)}_{i,j})=\ketbra{e^{(l)}_{i}}{e^{(l)}_{i}}\otimes\ketbra{f^{(l)}_{j}}{f^{(l)}_{j}},
\end{align}
where $\{\ket{e^{(l)}_{i}}\}_{i\in[d-1]}$ and $\{\ket{f^{(l)}_{j}}\}_{j\in[d_{\syso}-1]}$ are orthonormal bases for all $l\in[2]_{+}$.
The equality $\Upsilon^{*}_{\rho^{(l)}}=1/d$ holds.
Thus, each element of the testers becomes
\begin{align}
\T^{(l)}(x^{(l)}_{i,j})=\frac{1}{d}\ketbra{e^{(l)}_{i}}{e^{(l)}_{i}}\otimes\ketbra{f^{(l)}_{j}}{f^{(l)}_{j}}.
\end{align}
The upper bound is calculated as
\begin{align}
\ctil(x^{(1)}_{i(1),j(1)},x^{(2)}_{i(2),j(2)})
=\frac{1}{2}\left(1+\abs{\braket{e^{(1)}_{i(1)}|e^{(2)}_{i(2)}}}\abs{\braket{f^{(1)}_{j(1)}|f^{(2)}_{j(2)}}}\right)
\end{align}
for every $(i(1),j(1)),(i(2),j(2))\in[d-1]\times[d_{\syso}-1]$.
The trivial bound is calculated as follows:
\begin{align}
t(x^{(1)}_{i(1),j(1)},x^{(2)}_{i(2),j(2)})
&=\frac{1}{2d}\sum^{2}_{l=1}\max_{\Lambda}\tr\left[\left(\ketbra{e^{(l)}_{i(l)}}{e^{(l)}_{i(l)}}\otimes\ketbra{f^{(l)}_{j(l)}}{f^{(l)}_{j(l)}}\right)J_{\Lambda}\right] \\
&=\frac{1}{2d}\sum^{2}_{l=1}\max_{\Lambda}\braket{f^{(l)}_{j(l)}|\Lambda\left(\ketbra{(e^{(l)}_{i(l)})^{*}}{(e^{(l)}_{i(l)})^{*}}\right)|f^{(l)}_{j(l)}} \\
&=\frac{1}{d},
\end{align}
where $\ket{(\cdot)^{*}}$ denotes the complex conjugate with respect to $\{\ket{i}\}_{i\in[d-1]}$.
Each maximization in the right-hand side of the second equality is achieved by a channel $\Lambda^{(l)}_{i(l),j(l)}$ that acts as follows:
\begin{align}
\Lambda^{(l)}_{i(l),j(l)}\left(\ketbra{(e^{(l)}_{i(l)})^{*}}{(e^{(l)}_{i(l)})^{*}}\right)=\ketbra{f^{(l)}_{j(l)}}{f^{(l)}_{j(l)}}.
\end{align}
For example, this equality is realized by the depolarizing channel defined by $\Lambda^{(l)}_{i(l),j(l)}(\rho)=\ketbra{f^{(l)}_{j(l)}}{f^{(l)}_{j(l)}}$ for all $\rho\in\state(\hil_{\sysi})$, or, if $\hil_{\syso}=\hil$, a unitary channel whose unitary operator maps $\ket{(e^{(l)}_{i(l)})^{*}}$ to $\ket{f^{(l)}_{j(l)}}$.
If $d\geq2$, $\ctil(x^{(1)}_{i(1),j(1)},x^{(2)}_{i(2),j(2)})$ provides no meaningful information because $t(x^{(1)}_{i(1),j(1)},x^{(2)}_{i(2),j(2)})\leq \ctil(x^{(1)}_{i(1),j(1)},x^{(2)}_{i(2),j(2)})$ holds.
\end{exmp}
This example motivates us to consider settings such that $t(\bm{x})=1$, namely, cases where the probability distributions can be deterministic if the channel is chosen for each test.
A straightforward approach is to use orthonormal bases consisting of maximally entangled states as the POVMs $\E^{(l)}$.
We refer to these bases as maximally entangled bases (MEBs).
An example of an MEB is the Bell basis, which is used in many protocols\cite{nielsen2010quantum}.
Let $\hil_{\syso}=\hil$, and let $\{\ket{\Psi^{(l)}_{i}}\}_{i\in[d^{2}-1]}$ be an MEB of $\hil_{\sysa}\otimes\hil_{\syso}=\hil\otimes\hil$ for all $l\in[L]_{+}$.
Suppose that for every $l\in[L]_{+}$, the outcome set is given by $\Omega_{l}=\{x^{(l)}_{0},\ldots,x^{(l)}_{d^{2}-1}\}$, and each element of $\E^{(l)}$ is expressed as
\begin{align}
\E^{(l)}(x^{(l)}_{i})=\ketbra{\Psi^{(l)}_{i}}{\Psi^{(l)}_{i}}. \label{eq:povm_meb}
\end{align}
Each element of the testers becomes
\begin{align}
\T^{(l)}(x^{(l)}_{i})=\frac{1}{d}\ketbra{\Psi^{(l)}_{i}}{\Psi^{(l)}_{i}}.
\end{align}
Let $i(l)$ be an integer in $[d^{2}-1]$ for each $l\in[L]_{+}$.
For an arbitrary $(x^{(1)}_{i(1)},\ldots,x^{(L)}_{i(L)})\in\prod^{L}_{l=1}\Omega_{l}$, the trivial bound is calculated as follows:
\begin{align}
t(x^{(1)}_{i(1)},\ldots,x^{(L)}_{i(L)})
&=\sum^{L}_{l=1}r_{l}\max_{\Lambda}\braket{\Psi^{(l)}_{i(l)}|\frac{J_{\Lambda}}{d}|\Psi^{(l)}_{i(l)}} \\
&=1.
\end{align}
The second equality follows because for all $l\in[L]_{+}$, there is a unitary operator such that $\ket{\Psi^{(l)}_{i(l)}}=(I_{\sysa}\otimes U^{(l)}_{i(l)})\ket{\Psi_{+}}$, where $\ket{\Psi_{+}}=(1/\sqrt{d})\sum^{d-1}_{i=0}\ket{i\otimes i}$, and hence every maximum is achieved by the unitary channel defined as $\Lambda^{(l)}_{i(l)}(\cdot)=U^{(l)}_{i(l)}(\cdot)(U^{(l)}_{i(l)})^{\dagger}$.

Thus, using MEBs for the POVMs guarantees $t(x^{(1)}_{i(1)},\ldots,x^{(L)}_{i(L)})=1$.
In addition to the settings assumed in the previous paragraph, suppose $L=2$ and $r_{1}=r_{2}=1/2$.
The following proposition provides a specific formula for the upper bound.
\begin{pro}
For all $i,j\in[d^{2}-1]$, the upper bound is calculated as follows:
\begin{align}
\ctil(x^{(1)}_{i},x^{(2)}_{j})
=\frac{1}{2}\left(1+\abs{\braket{\Psi^{(1)}_{i}|\Psi^{(2)}_{j}}}\right). \label{eq:est_meb}
\end{align}
\end{pro}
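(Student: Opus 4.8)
The plan is to evaluate $\ctil(x^{(1)}_{i},x^{(2)}_{j})$ directly from its definition $\ctil(\bm{x})=d\,\normOp{\sum^{2}_{l=1}r_{l}\T^{(l)}(x^{(l)})}$ (here $d_{\sysi}=d$), using the explicit testers $\T^{(l)}(x^{(l)}_{i})=(1/d)\ketbra{\Psi^{(l)}_{i}}{\Psi^{(l)}_{i}}$ obtained just above. Substituting these together with $r_{1}=r_{2}=1/2$ gives $\sum^{2}_{l=1}r_{l}\T^{(l)}(x^{(l)})=\tfrac{1}{2d}\bigl(\ketbra{\Psi^{(1)}_{i}}{\Psi^{(1)}_{i}}+\ketbra{\Psi^{(2)}_{j}}{\Psi^{(2)}_{j}}\bigr)$, so that $\ctil(x^{(1)}_{i},x^{(2)}_{j})=\tfrac{1}{2}\normOp{\ketbra{\Psi^{(1)}_{i}}{\Psi^{(1)}_{i}}+\ketbra{\Psi^{(2)}_{j}}{\Psi^{(2)}_{j}}}$. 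The claim therefore reduces to the operator-norm identity $\normOp{\ketbra{u}{u}+\ketbra{v}{v}}=1+\abs{\braket{u|v}}$ for arbitrary unit vectors $u,v$, which is exactly the computation already invoked for Eq. (\ref{eq:fgur_bound}) in the state case.

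To prove that identity, I would note that $\ketbra{u}{u}+\ketbra{v}{v}$ annihilates the orthogonal complement of $W:=\Span\{u,v\}$, so its operator norm equals the largest eigenvalue of its restriction to $W$. If $u$ and $v$ are linearly dependent, then $\ket{v}=e^{i\phi}\ket{u}$ and the operator is $2\ketbra{u}{u}$, of norm $2=1+\abs{\braket{u|v}}$. Otherwise, choose an orthonormal basis $\{\ket{u},\ket{w}\}$ of $W$ and write $\ket{v}=\braket{u|v}\ket{u}+\beta\ket{w}$ with $\abs{\beta}^{2}=1-\abs{\braket{u|v}}^{2}$; the $2\times2$ matrix of $\ketbra{u}{u}+\ketbra{v}{v}$ in this basis has trace $2$ and determinant $1-\abs{\braket{u|v}}^{2}$, hence eigenvalues $1\pm\abs{\braket{u|v}}$, the larger one being $1+\abs{\braket{u|v}}$. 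Substituting $u=\ket{\Psi^{(1)}_{i}}$ and $v=\ket{\Psi^{(2)}_{j}}$ into the reduction of the previous paragraph then yields Eq. (\ref{eq:est_meb}).

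I do not anticipate a genuine obstacle: the only points requiring care are the linearly dependent case (dispatched above) and the fact that $\braket{u|v}$ may be complex, which is harmless since only its modulus enters — any phase can be absorbed into the choice of $\ket{w}$ or of the eigenvector. It is also worth stressing that this proposition uses nothing about the $\ket{\Psi^{(l)}_{i}}$ beyond their being unit vectors; their maximal entanglement is what forced $t(x^{(1)}_{i},x^{(2)}_{j})=1$ in the surrounding discussion, but it plays no role in the formula for $\ctil$ itself.
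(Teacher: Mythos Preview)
Your proposal is correct and follows the same route as the paper: both reduce $\ctil$ to $\tfrac{1}{2}\normOp{\ketbra{\Psi^{(1)}_{i}}{\Psi^{(1)}_{i}}+\ketbra{\Psi^{(2)}_{j}}{\Psi^{(2)}_{j}}}$ and then use that the largest eigenvalue of a sum of two rank-one projections onto unit vectors is $1+\abs{\braket{\Psi^{(1)}_{i}|\Psi^{(2)}_{j}}}$. The paper simply asserts this eigenvalue fact, whereas you supply the standard trace--determinant computation on $\Span\{u,v\}$; there is no substantive difference.
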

\begin{proof}
The upper bound is defined as
\begin{align}
\ctil(x^{(1)}_{i},x^{(2)}_{j})
=\frac{1}{2}\normOp{\ketbra{\Psi^{(1)}_{i}}{\Psi^{(1)}_{i}}+\ketbra{\Psi^{(2)}_{j}}{\Psi^{(2)}_{j}}}.
\end{align}
The largest eigenvalue of $\ketbra{\Psi^{(1)}_{i}}{\Psi^{(1)}_{i}}+\ketbra{\Psi^{(2)}_{j}}{\Psi^{(2)}_{j}}$ is given by $1+\abs{\braket{\Psi^{(1)}_{i}|\Psi^{(2)}_{j}}}$.
Hence, Eq. (\ref{eq:est_meb}) holds.
\end{proof}
Therefore, the FGUR for the testers is given by:
\begin{align}
&\vtil \nonumber \\
&=\left\{\frac{1}{2}p_{1}(x^{(1)}_{i}|\Lambda)+\frac{1}{2}p_{2}(x^{(2)}_{j}|\Lambda)
\leq\frac{1}{2}\left(1+\abs{\braket{\Psi^{(1)}_{i}|\Psi^{(2)}_{j}}}\right):i,j\in[d^{2}-1]\right\}. \label{eq:fgur_me}
\end{align}
Every inequality in $\vtil$ implies that the unavoidable trade-off between $p_{1}(x^{(1)}_{i}|\Lambda)$ and $p_{2}(x^{(2)}_{j}|\Lambda)$ exists if $\abs{\braket{\Psi^{(1)}_{i}|\Psi^{(2)}_{j}}}<1$.
Although $\ctil(x^{(1)}_{i},x^{(2)}_{j})$ is in general an upper bound, the right-hand side of Eq. (\ref{eq:est_meb}) coincides with the maximum, i.e., $c(x^{(1)}_{i},x^{(2)}_{j})$, if $d=2$.
\begin{pro} \label{pro:opt}
If $d=2$, then for all $i,j\in[d^{2}-1]$, the bound is calculated as follows:
\begin{align}
c(x^{(1)}_{i},x^{(2)}_{j})
=\frac{1}{2}\left(1+\abs{\braket{\Psi^{(1)}_{i}|\Psi^{(2)}_{j}}}\right).
\end{align}
The maximum in Eq. (\ref{eq:b}) is achieved by the unitary channel $\Lambda_{x^{(1)}_{i},x^{(2)}_{j}}(\cdot)=U_{x^{(1)}_{i},x^{(2)}_{j}}(\cdots)U_{x^{(1)}_{i},x^{(2)}_{j}}^{\dagger}$.
Here, $U_{x^{(1)}_{i},x^{(2)}_{j}}$ is defined by
\begin{align}
&U_{x^{(1)}_{i},x^{(2)}_{j}} \nonumber \\
&=
\begin{cases}
\frac{1}{\sqrt{2+\abs{\iProdHS{U^{(1)}_{i}}{U^{(2)}_{j}}}}}\left(U^{(1)}_{i}+\frac{\iProdHS{U^{(2)}_{j}}{U^{(1)}_{i}}}{\abs{\iProdHS{U^{(1)}_{i}}{U^{(2)}_{j}}}}U^{(2)}_{j}\right) & (\iProdHS{U^{(1)}_{i}}{U^{(2)}_{j}}\neq0), \\
U^{(1)}_{i} & (\iProdHS{U^{(1)}_{i}}{U^{(2)}_{j}}=0),
\end{cases}
\label{eq:opt_uni}
\end{align}
where $U^{(1)}_{i}$ and $U^{(2)}_{j}$ are the unitary operators given by $\ket{\Psi^{(1)}_{i}}=(I\otimes U^{(1)}_{i})\ket{\Psi_{+}}$ and $\ket{\Psi^{(2)}_{j}}=(I\otimes U^{(2)}_{j})\ket{\Psi_{+}}$, respectively, and $\iProdHS{U^{(1)}_{i}}{U^{(2)}_{j}}$ denotes the Hilbert--Schmidt inner product, i.e., $\iProdHS{U^{(1)}_{i}}{U^{(2)}_{j}}=\tr\left[(U^{(1)}_{i})^{\dagger}U^{(2)}_{j}\right]$.
\end{pro}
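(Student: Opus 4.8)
The plan is to upgrade the upper bound of the preceding proposition, $\ctil(x^{(1)}_i,x^{(2)}_j)=\tfrac12\bigl(1+\abs{\braket{\Psi^{(1)}_i|\Psi^{(2)}_j}}\bigr)$, to an equality when $d=2$ by producing a channel that saturates it; in fact the tightness criterion already contained in Proposition \ref{pro:b_est} does most of the work. Since $\sum_{l=1}^{2}r_l\T^{(l)}(x^{(l)})=\tfrac1{2d}\bigl(\ketbra{\Psi^{(1)}_i}{\Psi^{(1)}_i}+\ketbra{\Psi^{(2)}_j}{\Psi^{(2)}_j}\bigr)$ is a positive multiple of a sum of two rank-one projectors, Proposition \ref{pro:b_est} gives $c(x^{(1)}_i,x^{(2)}_j)=\ctil(x^{(1)}_i,x^{(2)}_j)$ as soon as the top eigenvector $\ket{\Psi_0}$ of $M:=\ketbra{\Psi^{(1)}_i}{\Psi^{(1)}_i}+\ketbra{\Psi^{(2)}_j}{\Psi^{(2)}_j}$ satisfies $\tr_{\syso}\ketbra{\Psi_0}{\Psi_0}=I/d$, i.e.\ is maximally entangled. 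So the whole statement reduces to the claim that \emph{for $d=2$, the top eigenvector of a sum of two rank-one projectors onto maximally entangled states is again maximally entangled}. Granting this, the Choi--Jamiołkowski isomorphism supplies a channel $\Lambda_{x^{(1)}_i,x^{(2)}_j}$ with $J_{\Lambda}/d=\ketbra{\Psi_0}{\Psi_0}$; this Choi operator is rank one with maximally mixed input marginal, so $\ket{\Psi_0}=(I\otimes U)\ket{\Psi_+}$ for a unitary $U$ and $\Lambda_{x^{(1)}_i,x^{(2)}_j}$ is the unitary channel of $U$. Evaluating the objective on it gives $\tfrac12 p_1+\tfrac12 p_2=\tfrac12\braket{\Psi_0|M|\Psi_0}=\tfrac12\bigl(1+\abs{\braket{\Psi^{(1)}_i|\Psi^{(2)}_j}}\bigr)=\ctil$, which is what the proposition asserts.

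To make $\ket{\Psi_0}$ and $U$ explicit, I would write $\ket{\Psi^{(1)}_i}=(I\otimes U^{(1)}_i)\ket{\Psi_+}$ and $\ket{\Psi^{(2)}_j}=(I\otimes U^{(2)}_j)\ket{\Psi_+}$, so that $\braket{\Psi^{(1)}_i|\Psi^{(2)}_j}=\tfrac1d\iProdHS{U^{(1)}_i}{U^{(2)}_j}$. If $\iProdHS{U^{(1)}_i}{U^{(2)}_j}=0$, then $\ket{\Psi^{(1)}_i}$ is itself a top eigenvector of $M$ (eigenvalue $1$) and is manifestly maximally entangled; this is the second branch of (\ref{eq:opt_uni}). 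Otherwise, a routine computation in the two-dimensional space $\Span\{\ket{\Psi^{(1)}_i},\ket{\Psi^{(2)}_j}\}$ shows that the largest eigenvalue of $M$ is $1+\abs{\braket{\Psi^{(1)}_i|\Psi^{(2)}_j}}$ with eigenvector $\ket{\Psi_0}\propto\ket{\Psi^{(1)}_i}+e^{-i\theta}\ket{\Psi^{(2)}_j}$, where $e^{i\theta}$ is the phase of $\braket{\Psi^{(1)}_i|\Psi^{(2)}_j}$; equivalently $\ket{\Psi_0}\propto(I\otimes W)\ket{\Psi_+}$ with $W=U^{(1)}_i+\frac{\iProdHS{U^{(2)}_j}{U^{(1)}_i}}{\abs{\iProdHS{U^{(1)}_i}{U^{(2)}_j}}}U^{(2)}_j$, the unnormalized operator appearing in (\ref{eq:opt_uni}). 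Then $\ket{\Psi_0}$ is maximally entangled precisely when $W$ is a scalar multiple of a unitary.

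The core step is to establish this for $d=2$. Setting $V=(U^{(1)}_i)^{\dagger}\tfrac{\iProdHS{U^{(2)}_j}{U^{(1)}_i}}{\abs{\iProdHS{U^{(1)}_i}{U^{(2)}_j}}}U^{(2)}_j$ we have $W=U^{(1)}_i(I+V)$ and $W^{\dagger}W=(I+V)^{\dagger}(I+V)=2I+V+V^{\dagger}$; a one-line trace computation gives $\tr V=\abs{\iProdHS{U^{(1)}_i}{U^{(2)}_j}}>0$, so $V$ is a $2\times2$ unitary whose trace is real and positive. Writing $V$ as a phase times an element of $SU(2)$ and comparing traces forces $\det V=1$, i.e.\ $V\in SU(2)$; and any matrix in $SU(2)$ has two mutually conjugate unit-modulus eigenvalues, whence $V+V^{\dagger}=(\tr V)I$. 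Hence $W^{\dagger}W=(2+\tr V)I=\bigl(2+\abs{\iProdHS{U^{(1)}_i}{U^{(2)}_j}}\bigr)I$, so $W/\sqrt{2+\abs{\iProdHS{U^{(1)}_i}{U^{(2)}_j}}}$ is unitary --- this is exactly $U_{x^{(1)}_i,x^{(2)}_j}$ of (\ref{eq:opt_uni}) --- and $\ket{\Psi_0}$ is maximally entangled, finishing the proof.

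I expect the only genuine obstacle to be this last computation: identifying exactly which feature of $d=2$ is being used. It is the $SU(2)$ identity $V+V^{\dagger}=(\tr V)I$, equivalently that a $2\times2$ unitary with real trace is, up to sign, special unitary. This identity fails for $d\ge3$, where the top eigenvector of a sum of two maximally entangled projectors is in general not maximally entangled, which is why the proposition is confined to qubits.
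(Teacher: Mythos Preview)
Your proof is correct and follows essentially the same route as the paper's: both reduce the question to showing that the operator $W=U^{(1)}_i+\frac{\iProdHS{U^{(2)}_j}{U^{(1)}_i}}{\abs{\iProdHS{U^{(1)}_i}{U^{(2)}_j}}}U^{(2)}_j$ is a scalar multiple of a unitary when $d=2$, and then identify $(I\otimes U_{x^{(1)}_i,x^{(2)}_j})\ket{\Psi_+}$ as the top eigenvector of $\ketbra{\Psi^{(1)}_i}{\Psi^{(1)}_i}+\ketbra{\Psi^{(2)}_j}{\Psi^{(2)}_j}$. The only presentational difference is in the key step: the paper diagonalizes $(U^{(1)}_i)^{\dagger}U^{(2)}_j$ and reduces unitarity of $W$ to the eigenvalue identity $\abs{\sum_{n\neq m}u_n}^{2}=1$, which is immediate for $d=2$ since each $\abs{u_n}=1$; you instead observe that your $V$ is a $2\times2$ unitary with real positive trace, force $\det V=1$, and invoke the $SU(2)$ identity $V+V^{\dagger}=(\tr V)I$. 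These are equivalent computations---your $V+V^{\dagger}=(\tr V)I$ is exactly the paper's equation (\ref{eq:uni_iff}) divided through by $\abs{\iProdHS{U^{(1)}_i}{U^{(2)}_j}}$---so neither buys anything the other doesn't, though your $SU(2)$ phrasing makes the dimensional obstruction for $d\ge3$ a bit more transparent.
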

\begin{proof}
The bound is given by
\begin{align}
c(x^{(1)}_{i},x^{(2)}_{j})
=\frac{1}{2}\max_{\Lambda}\tr\left[\left(\ketbra{\Psi^{(1)}_{i}}{\Psi^{(1)}_{i}}+\ketbra{\Psi^{(2)}_{j}}{\Psi^{(2)}_{j}}\right)\frac{J_{\Lambda}}{d}\right].
\end{align}
Therefore, it suffices to demonstrate that if $d=2$, $U_{x^{(1)}_{i},x^{(2)}_{j}}$ is a unitary operator, and the following equality holds:
\begin{align}
&\tr\left[\left(\ketbra{\Psi^{(1)}_{i}}{\Psi^{(1)}_{i}}+\ketbra{\Psi^{(2)}_{j}}{\Psi^{(2)}_{j}}\right)(\id\otimes\Lambda_{x^{(1)}_{i},x^{(2)}_{j}})\left(P_{+}\right)\right] \nonumber \\
&=1+\abs{\braket{\Psi^{(1)}_{i}|\Psi^{(2)}_{j}}}. \label{eq:b_opt}
\end{align} 
If $\iProdHS{U^{(1)}_{i}}{U^{(2)}_{j}}=0$, these properties are readily established.
Thus, we can assume $\iProdHS{U^{(1)}_{i}}{U^{(2)}_{j}}\neq0$.

Through direct computation, the following equality can be obtained:
\begin{align}
&U_{x^{(1)}_{i},x^{(2)}_{j}}^{\dagger}U_{x^{(1)}_{i},x^{(2)}_{j}} \nonumber \\
&=\frac{1}{2+\abs{\iProdHS{U^{(1)}_{i}}{U^{(2)}_{j}}}}\left(2I_{\sysi}+2\Real\left[\frac{\iProdHS{U^{(2)}_{j}}{U^{(1)}_{i}}}{\abs{\iProdHS{U^{(1)}_{i}}{U^{(2)}_{j}}}}(U^{(1)}_{i})^{\dagger}U^{(2)}_{j}\right]\right),
\end{align}
where $\Real[\cdot]$ is the real part of its argument, defined as $\Real[\cdot]=((\cdot)+(\cdot)^{\dagger})/2$.
Thus, $U_{x^{(1)}_{i},x^{(2)}_{j}}$ is unitary if and only if
\begin{align}
2\Real\left[\iProdHS{U^{(2)}_{j}}{U^{(1)}_{i}}(U^{(1)}_{i})^{\dagger}U^{(2)}_{j}\right]
=\abs{\iProdHS{U^{(1)}_{i}}{U^{(2)}_{j}}}^{2}I_{\sysi}. \label{eq:uni_iff}
\end{align}
Let $\{u_{n}\}_{n\in[d-1]}$ be the eigenvalues of $(U^{(1)}_{i})^{\dagger}U^{(2)}_{j}$.
Equation (\ref{eq:uni_iff}) is equivalent to the condition that the following equality holds for all $m\in[d-1]$:
\begin{align}
\abs{\sum_{n\neq m}u_{n}}^{2}=1
\end{align}
which always holds if $d=2$ (but does not necessarily hold if $d>2$).
Therefore, $U_{x^{(1)}_{i},x^{(2)}_{j}}$ is unitary if $d=2$.

If $d=2$, a simple calculation shows that the following equality holds:
\begin{align}
&(I_{\sysi}\otimes U_{x^{(1)}_{i},x^{(2)}_{j}})\ket{\Psi_{+}} \nonumber \\
&=\frac{1}{\sqrt{2\left(1+\abs{\braket{\Psi^{(1)}_{i}|\Psi^{(2)}_{j}}}\right)}}\left(\ket{\Psi^{(1)}_{i}}+\frac{\braket{\Psi^{(2)}_{j}|\Psi^{(1)}_{i}}}{\abs{\braket{\Psi^{(1)}_{i}|\Psi^{(2)}_{j}}}}\ket{\Psi^{(2)}_{j}}\right).
\end{align}
The right-hand side is the eigenstate of $\ketbra{\Psi^{(1)}_{i}}{\Psi^{(1)}_{i}}+\ketbra{\Psi^{(2)}_{j}}{\Psi^{(2)}_{j}}$ corresponding to the eigenvalue $1+\abs{\braket{\Psi^{(1)}_{i}|\Psi^{(2)}_{j}}}$.
Hence, Eq. (\ref{eq:b_opt}) holds.
\end{proof}

Therefore, if $d=2$, the FGUR for the testers is given by
\begin{align}
\mathcal{V}
=\left\{\frac{1}{2}p_{1}(x^{(1)}_{i}|\Lambda)+\frac{1}{2}p_{2}(x^{(2)}_{j}|\Lambda)
\leq\frac{1}{2}\left(1+\abs{\braket{\Psi^{(1)}_{i}|\Psi^{(2)}_{j}}}\right):i,j\in[3]\right\}. \label{eq:fgur_two_mebs}
\end{align}

As stated in Sect. \ref{sec:fg}, measurements with mutually unbiased bases lead to an unavoidable trade-off for arbitrary combinations of outcomes.
Therefore, it is interesting to ask whether there exist pairs of MEBs that are also mutually unbiased.
In Ref. \cite{PhysRevA.65.032320}, the orthonormal bases
\begin{align}
\biggl\{\ket{\Phi^{(1)}_{0}}&=\frac{1}{\sqrt{2}}\left(\ket{0\otimes x_{0}}+\ket{1\otimes x_{1}}\right), \nonumber \\
\ket{\Phi^{(1)}_{1}}&=\frac{1}{\sqrt{2}}\left(\ket{0\otimes x_{0}}-\ket{1\otimes x_{1}}\right), \nonumber \\
\ket{\Phi^{(1)}_{2}}&=\frac{1}{\sqrt{2}}\left(\ket{0\otimes x_{1}}+\ket{1\otimes x_{0}}\right), \nonumber \\
\ket{\Phi^{(1)}_{3}}&=\frac{1}{\sqrt{2}}\left(\ket{0\otimes x_{1}}-\ket{1\otimes x_{0}}\right)\biggr\}
\end{align}
and
\begin{align}
\biggl\{\ket{\Phi^{(2)}_{0}}&=\frac{1}{\sqrt{2}}\left(\ket{y_{0}\otimes 0}+i\ket{y_{1}\otimes 1}\right), \nonumber \\
\ket{\Phi^{(2)}_{1}}&=\frac{1}{\sqrt{2}}\left(\ket{y_{0}\otimes 0}-i\ket{y_{1}\otimes 1}\right), \nonumber \\
\ket{\Phi^{(2)}_{2}}&=\frac{1}{\sqrt{2}}\left(\ket{y_{0}\otimes 1}+i\ket{y_{1}\otimes 0}\right), \nonumber \\
\ket{\Phi^{(2)}_{3}}&=\frac{1}{\sqrt{2}}\left(\ket{y_{0}\otimes 1}-i\ket{y_{1}\otimes 0}\right)\biggr\}
\end{align}
are designed for a two-qubit system.
Here, $\{\ket{0},\ket{1}\}$ denotes an orthonormal basis for each qubit, and $\ket{x_{j}}$ and $\ket{y_{j}}$ are defined as $\ket{x_{j}}=(1/\sqrt{2})(\ket{0}+(-1)^{j}\ket{1})$ and $\ket{y_{j}}=(1/\sqrt{2})(\ket{0}+(-1)^{j}i\ket{1})$, respectively.
These two bases are mutually unbiased, i.e., $\abs{\braket{\Phi^{(1)}_{i}|\Phi^{(2)}_{j}}}=1/2$ for all $i,j\in[3]$.
Therefore, if $\ket{\Psi^{(l)}_{i}}=\ket{\Phi^{(l)}_{i}}$ for every $l\in[2]$ and $i\in[3]$, then Eq. (\ref{eq:fgur_two_mebs}) becomes
\begin{align}
\mathcal{V}
=\left\{\frac{1}{2}p_{1}(x^{(1)}_{i}|\Lambda)+\frac{1}{2}p_{2}(x^{(2)}_{j}|\Lambda)
\leq\frac{3}{4}:i,j\in[3]\right\}, \label{eq:ur_mumeb}
\end{align}
which indicates that the unavoidable trade-off between $p_{1}(x^{(1)}_{i}|\Lambda)$ and $p_{2}(x^{(2)}_{j}|\Lambda)$ exists for all $i,j\in[3]$.
Therefore, the pair of tests given by $(P_{+}/2,\{\ketbra{\Phi^{(1)}_{i}}{\Phi^{(1)}_{i}}\}_{i\in[3]})$ and $(P_{+}/2,\{\ketbra{\Phi^{(2)}_{j}}{\Phi^{(2)}_{j}}\}_{j\in[3]})$ is a concrete example that illustrates the nontrivial FGUR for testers.

\section{Conclusions and Discussion} \label{sec:conclusion}
Following Ref. \citen{PhysRevResearch.3.023077}, we have extended the FGURs derived in Ref. \citen{doi:10.1126/science.1192065} to apply them not only to measurements of states but also to tests of channels.
Because the bound of the generalized FGURs is defined as the maximum probability associated with a combination of outcomes, we provided an upper bound on this probability as a more accessible alternative.
In the case where tests correspond to measurements of states, the FGURs for testers align with the conventional FGURs.
Thus, our formulations represent genuine generalizations of the FGURs.
We established a necessary condition for the upper bound to be meaningful, and, based on this condition, examined tests using a maximally entangled input state.
As a result, an explicit FGUR for testers was derived, as shown in Eq. (\ref{eq:fgur_me}), for two tests involving MEBs.
Specifically, we introduced a mutually unbiased pair of MEBs for a two-qubit system to provide a concrete example that yields a nontrivial FGUR for testers.

It is worth noting that the left-hand side of the inequality in Eq. (\ref{eq:fgur_me}) can be calculated as
\begin{align}
\frac{1}{2}p_{1}(x^{(1)}_{i}|\Lambda)+\frac{1}{2}p_{2}(x^{(2)}_{j}|\Lambda)
&=\frac{1}{2}\braket{\Psi^{(1)}_{i}|\frac{J_{\Lambda}}{d}|\Psi^{(1)}_{i}}+\frac{1}{2}\braket{\Psi^{(2)}_{j}|\frac{J_{\Lambda}}{d}|\Psi^{(2)}_{j}}.
\end{align}
The operator $J_{\Lambda}/d$ is a state, and $\{\ketbra{\Psi^{(1)}_{i}}{\Psi^{(1)}_{i}}\}_{i\in[d^{2}-1]}$ and $\{\ketbra{\Psi^{(2)}_{j}}{\Psi^{(2)}_{j}}\}_{j\in[d^{2}-1]}$ can be viewed as POVMs.
Hence, Eq. (\ref{eq:fgur_me}) can also be derived using Eq. (\ref{eq:fgur_bound}), which represents the bound of the conventional FGURs.
This implies that existing uncertainty relations for state measurements are applicable to channel tests when they are performed with specific settings, such as a case where a maximally entangled state is used as input states, and MEBs are used as POVM measurements for output states.
However, this does not diminish the value of our analysis because our FGURs for testers are applicable to general tests involving input states that need not be maximally entangled and POVMs that are not necessarily given by MEBs.

Finally, we propose a possible generalization of $\mathcal{V}$.
Instead of $\mathcal{V}$, we can consider the following set of inequalities:
\begin{align}
\mathcal{V}(\mathcal{C}')
=\left\{\sum^{L}_{l=1}r_{l}\sum_{x_{l}\in\mathcal{X}_{l}}p_{l}(x^{(l)}|\Lambda)\leq c(\mathcal{C}',\mathcal{X}):\mathcal{X}:=(\mathcal{X}_{1},\ldots,\mathcal{X}_{L})\in\prod^{L}_{l=1}2^{\Omega_{l}}\right\},
\end{align}
where $\mathcal{C}'$ is a subset of $\mathcal{C}(\linear(\hil_{\sysi}),\linear(\hil_{\syso}))$, and $2^{\Omega_{l}}$ denotes the power set of $\Omega_{l}$ for each $l\in[L]_{+}$.
The bound is defined by
\begin{align}
c(\mathcal{C}',\mathcal{X})
=\max_{\Lambda\in\mathcal{C}'}\sum^{L}_{l=1}r_{l}\sum_{x_{l}\in\mathcal{X}_{l}}p_{l}(x^{(l)}|\Lambda).
\end{align}
Namely, $c(\mathcal{C}',\mathcal{X})$ is the maximum value of the probability associated with outcome subsets $\mathcal{X}_{1},\ldots,\mathcal{X}_{L}$ in $\mathcal{C}'$.
If $\mathcal{C}'=\mathcal{C}(\linear(\hil_{\sysi}),\linear(\hil_{\syso}))$, $\mathcal{V}(\mathcal{C}')$ contains $\mathcal{V}$, and thus is seen as a generalization of $\mathcal{V}$.
Using $\mathcal{V}(\mathcal{C}')$, we can explore any trade-off related to combinations of outcome subsets.
Additionally, restricting the maximization to $\mathcal{C}'$ may make the optimization problem more tractable.
For instance, the uncertainty relation for ancilla-free tests derived in Ref. \citen{doi:10.7566/JPSJ.92.034005} corresponds to the case where $\mathcal{C}'$ is the set of random unitary channels acting on a qubit.
This uncertainty relation is expressed with a straightforward formula, making it easy to compute for any pair of ancilla-free tests.
Besides random unitary channels, there are various other types of channels.
Specifically, entanglement breaking channels\cite{doi:10.1142/S0129055X03001709} play a crucial role in quantum information processing.
Investigating uncertainty relations for such subsets presents an intriguing area for future research.

\section*{Acknowledgment}
I would like to thank T. Miyadera for fruitful comments and discussions.


\begin{thebibliography}{10}

\bibitem{PhysRev.34.163}
H.~P. Robertson, Phys. Rev. \textbf{34}, 163 (1929).

\bibitem{PhysRevLett.50.631}
D. Deutsch, Phys. Rev. Lett. \textbf{50}, 631 (1983).

\bibitem{PhysRevLett.60.1103}
H. Maassen and J.~B.~M. Uffink, Phys. Rev. Lett. \textbf{60}, 1103 (1988).

\bibitem{RevModPhys.89.015002}
P.~J. Coles, M. Berta, M. Tomamichel, and S. Wehner, Rev. Mod. Phys. \textbf{89}, 015002 (2017).

\bibitem{doi:10.1126/science.1192065}
J. Oppenheim and S. Wehner, Science \textbf{330}, 1072 (2010).

\bibitem{PhysRevA.84.052117}
M.~H. Partovi, Phys. Rev. A \textbf{84}, 052117 (2011).

\bibitem{10.1088/1751-8113/46/27/272002}
Z. Puchała, Ł. Rudnicki, and K. Życzkowski, J. Phys. A \textbf{46}, 272002 (2013).

\bibitem{PhysRevLett.111.230401}
S. Friedland, V. Gheorghiu, and G. Gour, Phys. Rev. Lett. \textbf{111}, 230401 (2013).

\bibitem{PhysRevA.89.052115}
Ł. Rudnicki, Z. Puchała, and K. Życzkowski, Phys. Rev. A \textbf{89}, 052115 (2014).

\bibitem{PhysRevResearch.3.023077}
Y. Xiao, K. Sengupta, S. Yang, and G. Gour, Phys. Rev. Res. \textbf{3}, 023077 (2021).

\bibitem{Rastegin_2010}
A.~E. Rastegin, J. Phys. A \textbf{43}, 155302 (2010).

\bibitem{Rastegin2015}
A.~E. Rastegin, Quantum Inf. Process. \textbf{14}, 783 (2015).

\bibitem{PhysRevA.104.032424}
G. Sharma, S. Sazim, and S. Mal, Phys. Rev. A \textbf{104}, 032424 (2021).

\bibitem{PhysRevA.109.022408}
X. Zhou, L.-L. Sun, and S. Yu, Phys. Rev. A \textbf{109}, 022408 (2024).

\bibitem{PhysRevLett.101.180501}
G. Chiribella, G.~M. D'Ariano, and P. Perinotti, Phys. Rev. Lett. \textbf{101}, 180501 (2008).

\bibitem{PhysRevA.77.062112}
M. Ziman, Phys. Rev. A \textbf{77}, 062112 (2008).

\bibitem{JAMIOLKOWSKI1972275}
A.~Jamiołkowski, Rep. Math. Phys. \textbf{3}, 275 (1972).

\bibitem{CHOI1975285}
M.-D. Choi, Linear Algebra Appl. \textbf{10}, 285 (1975).

\bibitem{PhysRevLett.91.047902}
G.~M. D'Ariano and P.~L. Presti, Phys. Rev. Lett. \textbf{91}, 047902 (2003).

\bibitem{PhysRevLett.90.193601}
J.~B.~Altepeter, D.~Branning, E.~Jeffrey, T.~C.~Wei, P.~G.~Kwiat, R.~T.~Thew, J.~L.~O'Brien, M.~A.~Nielsen, and A.~G.~White, Phys. Rev. Lett. \textbf{90}, 193601 (2003).

\bibitem{nielsen2010quantum}
M.~A. Nielsen and I.~L. Chuang, {\em Quantum computation and quantum information} (Cambridge University Press, Cambridge, U.K., 2010) 10th Anniversary Edition.

\bibitem{PhysRevA.65.032320}
J. Lawrence, \v{C}. Brukner, and A. Zeilinger, Phys. Rev. A \textbf{65}, 032320 (2002).

\bibitem{doi:10.7566/JPSJ.92.034005}
T. Kimoto and T. Miyadera, J. Phys. Soc. Jpn. \textbf{92}, 034005 (2023).

\bibitem{doi:10.1142/S0129055X03001709}
M. Horodecki, P.~W. Shor, and M.~B. Ruskai, Rev. Math. Phys. \textbf{15}, 629 (2003).

\end{thebibliography}
\end{document}